\keywords{Monad; monad composition; rewriting}
\newcommand\Set{\ensuremath{\mathsf{Set}}}
\newcommand\Cat{\ensuremath{\mathsf{Cat}}}
\newcommand\cod{\mathop{\mathsf{cod}}}
\newcommand\dom{\mathop{\mathsf{dom}}}
\newcommand\id{\mathsf{id}}
\newcommand\muP{\ensuremath{\mu^P}}
\newcommand\etaP{\ensuremath{\eta^P}}
\newcommand\muT{\ensuremath{\mu^T}}
\newcommand\etaT{\ensuremath{\eta^T}}
\newcommand\muPT{\ensuremath{\mu^{PT}}}
\newcommand\etaPT{\ensuremath{\eta^{PT}}}
\newcommand\CatB{\ensuremath{\mathsf B}}
\newcommand\CatC{\ensuremath{\mathsf C}}
\newcommand\CatD{\ensuremath{\mathsf D}}
\newcommand\CatE{\ensuremath{\mathsf E}}
\newcommand\derives{\stackrel *\to}
\newcommand\TCAT[2]{\ensuremath{(#1;#2)}}
\newcommand\RCAT[4]{(#1,#2,#3,#4)}
\newcommand\goesto[1]{\stackrel{#1}{\longrightarrow}}
\newcommand\eps\varepsilon
\renewcommand\star{^*}
\renewcommand\phi\varphi
\renewcommand\AA{\mathcal{A}}
\newcommand\CC{\mathcal{C}}
\newcommand\EE{\mathcal{E}}
\newcommand\FF{\mathcal{F}}
\newcommand\OO{\mathcal{O}}
\newcommand\RR{\mathcal{R}}
\newcommand\seq[3]{#1_{#2},\ldots,#1_{#3}}
\newcommand\set[2]{\{#1 \mid #2\}}
\newcommand\subs{\mathrel{\subseteq}}
\begin{document}

\title[Natural Transformations as Rewrite Rules]{Natural Transformations as Rewrite Rules\\and Monad Composition}

\author[D.~Kozen]{Dexter Kozen}
\address{Computer Science Department, Cornell University, Ithaca, NY 14853-7501, USA}
\email{kozen@cs.cornell.edu}

\begin{abstract}
\noindent Eklund et al.~(2002) present a graphical technique aimed at simplifying the verification of various category-theoretic constructions, notably the composition of monads. In this note we take a different approach involving string rewriting. We show that a given tuple $(T,\mu,\eta)$ is a monad if and only if $T$ is a terminal object in a certain category of strings and rewrite rules, and that this fact can be established by proving confluence of the rewrite system. We illustrate the technique on the monad composition problem. We also give a characterization of adjunctions in terms of rewrite categories.
\end{abstract}
\dedicatory{In honor of Ji\v{r}\'{i} Ad\'{a}mek on the Occasion of his Seventieth Birthday.}

\maketitle

\section{Introduction}
\label{sec:intro}

\nocite{Adamek09}%
As common constructions in the theory of data types and programming language semantics become better understood, there is a natural tendency toward generality. One desires to isolate common underlying principles, to unify related notions in a common framework, and to provide powerful abstract tools for reasoning and understanding.

A good example of one successful such enterprise is the use of monads in functional and logic programming \cite{Moggi91,Wadler95,BarrWells90,Eklund00}. Monads provide a clean way to combine modules or extend functionality of programming languages or data structures with new features such as continuations, state, and concurrency \cite{Moggi91,HarrisonKamin98,Wadler95,Claessen99}. They have been applied to parsing and type checking \cite{Wadler92}, semantics of nondeterministic and probabilistic computation \cite{Moggi91,RamseyPfeffer02,Panangaden00}, and unification in logic programming \cite{RydeheardBurstall86}.

Unfortunately, greater abstraction is often accompanied by reduced accessibility. Many abstract constructions, although well motivated by applications, may at times be difficult to navigate when presented in more abstract form. In particular, reasoning about the basic properties of monads---such as \emph{monad composition}, the construction that underlies many of the applications above---relies on the combinatorial manipulation of functors and natural transformations. Verification often requires a complicated process of arrow chasing in large diagrams. Specialized verification tasks such as the following example from \cite{BarrWells02} are not uncommon.
\begin{center}
  \small
  \begin{tikzpicture}[->, >=stealth', auto]
   \scriptsize
   \node (00) {$T_1^2T_2$};
   \node (01) [right of=00, node distance=35mm] {$T_1T_2T_1T_2T_1$};
   \node (02) [right of=01, node distance=25mm] {$T_1T_2T_1$};
   \node (10) [below of=00, node distance=1.3cm] {$T_1^2T_2T_1$};
   \node (11) [right of=10, node distance=35mm] {$T_2T_1T_2T_1T_2T_1$};
   \node (12) [right of=11, node distance=25mm] {$T_2T_1T_2T_1$};
   \node (20) [below of=10, node distance=1.3cm] {$T_1T_2T_1$};
   \node (21) [right of=20, node distance=35mm] {$T_2T_1T_2T_1$};
   \node (22) [right of=21, node distance=25mm] {$T_2T_1$};
   \path (00) edge node {$T_1\eta_2T_1T_2\eta_1$} (01)
         edge node[swap] {$T_1^2T_2\eta_1$} (10);
   \path (01) edge node {$T_1\mu$} (02)
         edge node[swap] {$\eta_2T_1T_2T_1T_2T_1$} (11);
   \path (02) edge node {$\eta_2T_1T_2T_1$} (12);
   \path (10) edge node[swap] {$\eta_2T_1\eta_2T_1T_2T_1$} (11)
         edge node[swap] {$\mu_1T_2T_1$} (20);
   \path (11) edge node[swap] {$T_2T_1\mu$} (12)
         edge node[swap] {$\mu T_2T_1$} (21);
   \path (12) edge node {$\mu$} (22);
   \path (20) edge node[swap] {$\eta_2T_1T_2T_1$} (21);
   \path (21) edge node[swap] {$\mu$} (22);
  \end{tikzpicture}
\end{center}
There are several general accounts of monad composition in the literature \cite{JonesDuponcheel93,BarrWells90,Eklund00}. In particular, Eklund et al.~\cite{Eklund00} make note of the difficulty of monad composition proofs and present a graphical technique aimed at simplifying the process. Their technique provides a pictorial representation of various constructions.

In this note we take a different approach involving string rewriting. We show that a given tuple $(T,\mu,\eta)$ is a monad if and only if $T$ is a terminal object in a certain category whose objects are strings and whose morphisms are rewrite rules, and the fact that $T$ is terminal can be established by proving confluence of the string rewriting system. We illustrate the technique on the monad composition problem studied by Eklund et al. We also give a characterization of adjoint functors in terms of rewrite categories.

\section{Preliminaries}
\label{sec:func}

Let $\CatC,\CatD$ be categories. Recall that a \emph{natural transformation} $\phi:F\to G$ between functors $F,G:\CatC\to\CatD$ is a collection of morphisms $\phi_A:FA\to GA$ of $\CatD$, one for each object $A$ of $\CatC$, such that for any morphism $h:A\to B$ of $\CatC$, the following diagram commutes:
\begin{equation}
\begin{array}c
  \begin{tikzpicture}[->, >=stealth', node distance=16mm, auto]
   \scriptsize
   \node (00) {$FA$};
   \node (01) [right of=00] {$FB$};
   \node (10) [below of=00, node distance=12mm] {$GA$};
   \node (11) [right of=10] {$GB$};
   \path (00) edge node {$Fh$} (01)
         edge node[swap] {$\phi_A$} (10);
   \path (10) edge node[swap] {$Gh$} (11);
   \path (01) edge node {$\phi_B$} (11);
  \end{tikzpicture}
\end{array}
\label{dia:nattrans}
\end{equation}
Categories, functors, and natural transformations form a 2-category \Cat\ in which the 0-cells (objects) are the categories, the 1-cells (morphisms) are the functors, and the 2-cells (morphisms of morphisms) are the natural transformations.

In a general 2-category, composition of 1-cells and composition of 2-cells are called \emph{horizontal} and \emph{vertical composition}, respectively, and are often denoted $\circ_0$ and $\circ_1$, respectively; but we will usually write $\circ$ for $\circ_1$ and omit the $\circ_0$ altogether. Horizontal composition also acts on 2-cells and satisfies the property:
If $F_1,F_2,F_3:\CatC\to\CatD$ and $G_1,G_2,G_3:\CatD\to\CatE$ are 1-cells and $\phi:F_1\to F_2$, $\phi':F_2\to F_3$, $\psi:G_1\to G_2$, and $\psi':G_2\to G_3$ are 2-cells, then
\begin{align}
(\psi'\circ\psi)(\phi'\circ\phi) &= (\psi'\phi')\circ(\psi\phi).\label{eq:2cell}
\end{align}

A special case of the action of horizontal composition on 2-cells is the following. If $S:\CatD\to\CatE$, $F,G:\CatC\to\CatD$, and $T:\CatB\to\CatC$ are 1-cells and $\phi:F\to G$ is a 2-cell, there is a 2-cell $S\phi T:SFT\to SGT$ obtained from the horizontal composition of $\phi$ with the identities $\id_S$ on the left and $\id_T$ on the right. It is helpful to think of this operation as a kind of scalar multiplication. It satisfies the following properties:
\begin{gather}
I\phi\ =\ \phi\ =\ \phi I\nonumber\\
(SS')\phi\ =\ S(S'\phi) \qquad \phi(TT')\ =\ (\phi T)T'\nonumber\\
S(\phi\circ\psi)\ =\ S\phi\circ S\psi \qquad (\phi\circ\psi)T\ =\ \phi T\circ\psi T.\label{eqn:module3}
\end{gather}
Thus we can write $SS'\phi$ and $\phi TT'$ without ambiguity.

\begin{exa}
In the 2-category \Cat, if $S$ and $T$ are functors and $\phi$ is a natural transformation with components $\phi_A$, then $S\phi T$ is the natural transformation with components $(S\phi T)_A=S(\phi_{TA})$.
\end{exa}

It follows from \eqref{eqn:module3} that the commutativity of diagrams is preserved under scalar multiplication; that is, if the left-hand diagram in \eqref{dia:extend} commutes, then so does the right:
\begin{equation}
\begin{array}c
  \begin{tikzpicture}[->, >=stealth', node distance=16mm, auto]
   \scriptsize
   \node (00) {$F$};
   \node (01) [right of=00] {$G$};
   \node (10) [below of=00, node distance=12mm] {$X$};
   \node (11) [right of=10] {$Y$};
   \path (00) edge node {$\phi$} (01)
         edge node[swap] {$\psi$} (10);
   \path (10) edge node[swap] {$\sigma$} (11);
   \path (01) edge node {$\tau$} (11);
   \node (00) [right of=01, node distance=25mm] {$SFT$};
   \node (01) [right of=00] {$SGT$};
   \node (10) [below of=00, node distance=12mm] {$SXT$};
   \node (11) [right of=10] {$SYT$};
   \path (00) edge node {$S\phi T$} (01)
         edge node[swap] {$S\psi T$} (10);
   \path (10) edge node[swap] {$S\sigma T$} (11);
   \path (01) edge node {$S\tau T$} (11);
  \end{tikzpicture}
\end{array}
\label{dia:extend}
\end{equation}

From these observations, one can begin to see the motivation for viewing natural transformations as rewrite rules: a rewrite rule $\phi:F\to G$ can be applied in the context of a string $S\phi T:SFT\to SGT$. For this to make sense, though, it had better be the case that the diamond property holds for non-overlapping redexes; that is, it must be possible to apply two rewrite rules with non-overlapping redexes in either order. We will see later (Lemma \ref{lem:disjoint}) that this is indeed the case, and in fact holds in all 2-categories as a consequence of \eqref{eq:2cell}.

\section{Rewrite Categories}

A \emph{rewrite category} is like a string rewrite system, except the semantics of the derivations are taken into account in the definitions of confluence and local confluence.

Formally, a \emph{rewrite category} is defined in terms of a finitely presented 2-category $\RCAT\OO\FF\RR\EE$, where $\OO$, $\FF$, and $\RR$ are finite sets of 0-cells, 1-cells, and 2-cells, respectively, and $\EE$ is a finite set of well-typed equations between 2-cell-valued expressions over $\FF$ and $\RR$ generating an equational theory on 2-cells. The 1-cells do not satisfy any equations except those imposed by the axioms of category theory, thus form the free category (free typed monoid) freely generated by $\FF$. Each 1-cell corresponds to a well-typed string $F_1\cdots F_n$, where \emph{well-typed} means $\dom F_i=\cod F_{i+1}$, $1\leq i\leq n-1$. We denote by $\FF\star$ the set of all well-typed strings over $\FF$. The elements of $\RR$ are called \emph{rewrite rules} and generate the set of all 2-cells under horizontal ($\circ_0$) and vertical ($\circ_1$) composition.

The rewrite category itself is a substructure of the category of 1- and 2-cells of this finitely presented 2-category. It is specified by a tuple $\TCAT\AA{\RR'}$, where $\RR'\subs\RR$ and $\AA$ is any subset of $\FF\star$ closed under the action of $\RR'$.

Rewrite categories provide a syntax that can be interpreted in the 2-category \Cat\ of categories, functors, and natural transformations. A well-typed string $F_1\cdots F_n$ is interpreted as a functor $\dom F_n\to\cod F_1$, where concatenation of strings is interpreted as composition of functors. The rewrite rules describe the action of natural transformations on these strings.

The equations $\EE$ generating the equational theory of the 2-category typically represent an equational axiomatization of the construct under study. However, from the perspective of string rewriting, they serve a different purpose: they represent local confluence conditions that can be used to establish global confluence in the rewrite category, leading to the existence of terminal objects.

\begin{exa}
\label{ex:monad}
Recall that a \emph{monad} on a category $\CatC$ is a triple $(T,\mu,\eta)$, where $T$ is an endofunctor on $\CatC$ and $\mu:T^2\to T$ and $\eta:I\to T$ are natural transformations such that $\mu\circ\mu T=\mu\circ T\mu$ and $\mu\circ\eta T=\mu\circ T\eta=\id$; that is, the diagrams
\begin{equation}
\begin{array}{c@{\hspace{1cm}}c}
  \begin{tikzpicture}[->, >=stealth', node distance=16mm, auto]
   \scriptsize
   \node (00) {$T^3$};
   \node (01) [right of=00] {$T^2$};
   \node (10) [below of=00, node distance=12mm] {$T^2$};
   \node (11) [right of=10] {$T$};
   \path (00) edge node {$T\mu$} (01)
         edge node[swap] {$\mu T$} (10);
   \path (10) edge node[swap] {$\mu$} (11);
   \path (01) edge node {$\mu$} (11);
   \node (00) [right of=01, node distance=25mm] {$T$};
   \node (01) [right of=00] {$T^2$};
   \node (10) [below of=00, node distance=12mm] {$T^2$};
   \node (11) [right of=10] {$T$};
   \path (00) edge node {$T\eta$} (01)
         edge node[swap] {$\eta T$} (10)
         edge node {$\id$} (11);
   \path (10) edge node[swap] {$\mu$} (11);
   \path (01) edge node {$\mu$} (11);
  \end{tikzpicture}
\end{array}
\label{dia:monad}
\end{equation}
commute. A typical example is the \emph{powerset monad} $(P,\muP,\etaP)$ on \Set, where $PA$ is the powerset of $A$, $\etaP_A(x) = \{x\}$, and $\muP_A(\CC) = \bigcup\CC$.

Monads are characterized by a rewrite category $\TCAT{T\star}{\eta,\mu}$, where $T\star$ is the free monoid on one generator $T$ with rewrite rules $\eta:I\to T$ and $\mu:T^2\to T$. The underlying 2-category has one 0-cell. The equational theory on 2-cells is given by the defining equations for monads \eqref{dia:monad}.

The properties \eqref{dia:monad} completely axiomatize the equational theory of monads, but more importantly from the from the point of view of rewriting, they specify local confluence properties in the case of overlapping redexes. For example, the string $T^3$ contains two overlapping redexes for the rule $\mu$. The left-hand diagram of \eqref{dia:monad} specifies that this configuration satisfies the diamond property.

We will show later (Theorem \ref{thm:monad}) that $T$ is a terminal object in the rewrite category, and that this fact is equivalent to the axiomatization \eqref{dia:monad}.
\end{exa}

\begin{exa}
\label{ex:monads}
The composition of monads is characterized by a rewrite category
\begin{align*}
\TCAT{\{P,T\}\star}{\eta^P,\mu^P,\eta^T,\mu^T,\theta}, 
\end{align*}
where $\{P,T\}\star$ is the free monoid on two generators $P,T$. As in Example \ref{ex:monad}, the underlying 2-category has one 0-cell. The equational theory is given by the axiomatization of two monads on endofunctors $P$ and $T$, respectively, and a distributive law $\theta:TP\to PT$ connecting them. We will give more details in \S\ref{sec:monads}. Again, the axioms can be viewed as local confluence properties for overlapping redexes in the rewrite category.
\end{exa}

\begin{exa}
\label{ex:adjunction}
Adjunctions are characterized by a rewrite category $\TCAT{F(GF)\star\cup G(FG)\star}{\eta,\eps}$. In this example, the underlying 2-category has two 0-cells $\CatC,\CatD$ representing two categories, 1-cells $F:\CatC\to\CatD$ and $G:\CatD\to\CatC$ representing left and right adjoint functors, respectively, and 2-cells $\eta:I\to GF$ and $\eps:FG\to I$ representing the unit and counit of the adjunction, respectively.
\end{exa}

\subsection{Local Confluence and Normal Forms}

Given a 1-cell $SXT$ and a rule $\phi:X\to Y$ of a rewrite category, the scalar multiple $S\phi T:SXT\to SYT$ can be viewed as rewriting the indicated occurrence of $X$ to $Y$ in the context of the string $SXT$. This is called a \emph{reduction}. The \emph{redex} of the reduction $S\phi T$ is the substring $X$ of $SXT$. A sequence of reductions from $Y$ to $Z$ is called a \emph{derivation}. We write $\pi:Y\derives Z$ if $\pi$ is such a derivation. A string is in \emph{normal form} if no rules apply.

The rules $\RR$ of the rewrite category each have a type $X\to Y$ for various $X,Y\in\FF\star$, where $\dom X=\dom Y$ and $\cod X=\cod Y$. Applying these rules to strings in $\FF\star$, one can generate diagrams in the rewrite category. We say that the system $\RR$
\begin{itemize}
\item
is \emph{confluent} if for any two derivations $X\derives U$ and $X\derives V$, there is a word $W$ and derivations $U\derives W$ and $V\derives W$ such that the resulting diagram commutes;
\item
is \emph{locally confluent} if for any two single reductions $X\to U$ and $X\to V$, there is a word $W$ and derivations $U\derives W$ and $V\derives W$ such that the resulting diagram commutes; and
\item
has the \emph{diamond property} if for any two reductions $X\to U$ and $X\to V$, there is a word $W$ and reductions $U\to W$ and $V\to W$ such that the resulting diagram commutes. This might be a pushout in the rewrite category, but not necessarily.
\end{itemize}
Local confluence does not imply confluence, but the diamond property does. See \cite{BaaderNipkow98} for a thorough treatment of these concepts.

Note, however, that our reductions have semantic content as well as syntactic. In our definitions of confluence, local confluence, and the diamond property, it is not enough that two derivations derive the same word; the resulting diagrams must also commute in the rewrite category. We say that two derivations $X\derives Y$ are \emph{equivalent} if the composition of the 2-cells along the two paths are equal.

\begin{exa}
One of the defining properties for monads, namely the left-hand diagram of \eqref{dia:monad}, says that $\mu$ as a reduction rule can be applied to the string $T^3$ in two ways to obtain $T^2$: one way as $\mu T$ to the leftmost two occurrences of $T$ (the left arrow of the diagram) and the other as $T\mu$ to the rightmost (the top arrow), in both cases giving $T^2$. By applying $\mu$ again to the two occurrences of $T^2$, we obtain a commutative diamond. 

By \eqref{dia:extend}, we can compose on the left and right with any strings in $T\star$:
\begin{equation}
\begin{array}c
  \begin{tikzpicture}[->, >=stealth', node distance=30mm, auto]
   \scriptsize
   \node (00) {$T^{m+n+3}$};
   \node (01) [right of=00] {$T^{m+n+2}$};
   \node (10) [below of=00, node distance=12mm] {$T^{m+n+2}$};
   \node (11) [right of=10] {$T^{m+n+1}$};
   \path (00) edge node {$T^{m+1}\mu T^n$} (01)
         edge node[swap] {$T^m\mu T^{n+1}$} (10);
   \path (10) edge node[swap] {$T^m\mu T^n$} (11);
   \path (01) edge node {$T^m\mu T^n$} (11);
  \end{tikzpicture}
\end{array}
\label{dia:muext}
\end{equation}
This says that any two reductions involving the rewrite rule $\mu$ with overlapping redexes, applied anywhere in a string of length at least three, can be completed to a commutative diamond. 
\end{exa}

For nonoverlapping redexes, the diamond property always holds. This is a consequence of property \eqref{eq:2cell} of 2-categories.

\begin{lem}
\label{lem:disjoint}
Any two applications of rewrite rules with disjoint redexes can be applied in either order, and the resulting diagram commutes. That is, any well-typed diamond of the form
\begin{equation}
\begin{array}c
  \begin{tikzpicture}[->, >=stealth', node distance=25mm, auto]
   \scriptsize
   \node (00) {$PQRST$};
   \node (01) [right of=00] {$PQRYT$};
   \node (10) [below of=00, node distance=12mm] {$PXRST$};
   \node (11) [right of=10] {$PXRYT$};
   \path (00) edge node {$PQR\tau T$} (01)
         edge node[swap] {$P\sigma RST$} (10);
   \path (10) edge node[swap] {$PXR\tau T$} (11);
   \path (01) edge node {$P\sigma RYT$} (11);
  \end{tikzpicture}
\end{array}
\label{dia:disjoint}
\end{equation}
commutes, where $P,Q,R,S,T,X,Y$ are 1-cells and $\sigma:Q\to X$ and $\tau:S\to Y$ are 2-cells.
\end{lem}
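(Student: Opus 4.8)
The plan is to recognize the commutativity of \eqref{dia:disjoint} as a direct instance of the interchange law \eqref{eq:2cell}, which is the only genuinely structural fact required. The geometric point is that disjoint redexes can always be separated by a vertical cut: since the occurrences $Q$ and $S$ do not overlap, I can split the string $PQRST$ into two composable blocks, each containing exactly one redex. Concretely, I would group it as the horizontal composite of the 1-cells $PQR$ and $ST$, absorbing the interstitial string $R$ into the left block (the right block would serve equally well).

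First I would rewrite each of the four edges of \eqref{dia:disjoint} as a horizontal composite in which one factor is an identity 2-cell. Writing $\alpha = P\sigma R:PQR\to PXR$ and $\beta = \tau T:ST\to YT$, the conventions \eqref{eqn:module3} give
\[
  PQR\tau T = \id_{PQR}\,\beta, \qquad P\sigma RYT = \alpha\,\id_{YT}, \qquad P\sigma RST = \alpha\,\id_{ST}, \qquad PXR\tau T = \id_{PXR}\,\beta,
\]
so that the top/right path and the left/bottom path of \eqref{dia:disjoint} are the vertical composites $(\alpha\,\id_{YT})\circ(\id_{PQR}\,\beta)$ and $(\id_{PXR}\,\beta)\circ(\alpha\,\id_{ST})$, respectively.

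Next I would apply \eqref{eq:2cell} to each composite, reading the law from right to left so as to trade a vertical composite of horizontal composites for a horizontal composite of vertical composites. For the first path this yields $(\alpha\circ\id_{PQR})(\id_{YT}\circ\beta)$ and for the second $(\id_{PXR}\circ\alpha)(\beta\circ\id_{ST})$. Using the unit laws for vertical composition, every inner factor collapses to $\alpha$ or to $\beta$, so both paths reduce to the single 2-cell $\alpha\beta = (P\sigma R)(\tau T):PQRST\to PXRYT$. Hence the two paths are equal and \eqref{dia:disjoint} commutes.

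I expect the main obstacle to be bookkeeping rather than conceptual: one must keep the typing straight so that the blocks $PQR$ and $ST$ are genuinely composable, identify each edge with the correct identity-padded horizontal composite, and match the quadruple $(\psi',\psi,\phi',\phi)$ of \eqref{eq:2cell} to the correct block on each side. There is no case analysis and no appeal to the particular rules $\sigma,\tau$ beyond their being 2-cells; the argument is uniform, which is precisely why the lemma holds, as claimed, in every 2-category.
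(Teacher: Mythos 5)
Your proof is correct and takes essentially the same route as the paper's: both reduce the diamond to the interchange law \eqref{eq:2cell} by cutting the string into two composable blocks, each containing exactly one redex, and padding each rewrite with identity 2-cells. The only cosmetic difference is that you absorb the interstitial string $R$ into the left block (working with $\alpha = P\sigma R$ and $\beta = \tau T$), whereas the paper absorbs it into the right block, instantiating its intermediate diagram \eqref{dia:disjoint2} with $\phi = P\sigma$ and $\psi = R\tau T$.
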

\begin{proof}
It follows from property \eqref{eq:2cell} of 2-categories that if $\phi:F\to G$ and $\psi:M\to N$ are 2-cells and the following diagram is well-typed, then it commutes.
\begin{equation}
\begin{array}c
  \begin{tikzpicture}[->, >=stealth', node distance=16mm, auto]
   \scriptsize
   \node (00) {$FM$};
   \node (01) [right of=00] {$FN$};
   \node (10) [below of=00, node distance=12mm] {$GM$};
   \node (11) [right of=10] {$GN$};
   \path (00) edge node {$F\psi$} (01)
         edge node[swap] {$\phi M$} (10);
   \path (10) edge node[swap] {$G\psi$} (11);
   \path (01) edge node {$\phi N$} (11);
   \path (00) edge node {$\phi\psi$} (11);
  \end{tikzpicture}
\end{array}
\label{dia:disjoint2}
\end{equation}
(In fact, \eqref{eq:2cell} is equivalent to \eqref{eqn:module3} and \eqref{dia:disjoint2}.)
The diagram \eqref{dia:disjoint} is a special case of \eqref{dia:disjoint2} with the following substitutions: $PQ$ for $F$, $RST$ for $M$, $PX$ for $G$, $RYT$ for $N$, $P\sigma$ for $\phi$, and $R\tau T$ for $\psi$.
\end{proof}

The property \eqref{dia:nattrans} of natural transformations is a stronger form of \eqref{dia:disjoint2}.

\section{Applications}
\label{sec:main}

In this section we demonstrate the use rewrite categories in the verification of monad composition as presented by Eklund et al.~\cite{Eklund00}. We also give a characterization of monads and adjunctions in terms of rewrite categories.

The following lemma and its proof introduce our approach at a basic level.

\begin{lem}
\label{lem:basic}
Let $T^+$ denote the set of nonnull strings of $T$'s and $\mu$ the rewrite rule $T^2\to T$. The following are equivalent:
\begin{enumerate}
\item
$\mu$ satisfies the left-hand diagram of \eqref{dia:monad};
\item
$T$ is a terminal object in the rewrite category $\TCAT{T^+}\mu$.
\end{enumerate}
\end{lem}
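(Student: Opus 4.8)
The plan is to treat the two implications separately. The backward direction (2)$\Rightarrow$(1) is immediate: if $T$ is terminal in $\TCAT{T^+}\mu$ then there is a unique morphism $T^3\to T$, and since $\mu\circ\mu T$ and $\mu\circ T\mu$ are both derivations $T^3\derives T$, uniqueness forces them to coincide as 2-cells---which is exactly the commutativity of the left-hand diagram of \eqref{dia:monad}.

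For the forward direction (1)$\Rightarrow$(2) I would argue that the rewrite system $\TCAT{T^+}\mu$ has the diamond property and then invoke the fact, recorded above, that the diamond property implies confluence. Two single reductions out of $T^n$ (applications of $\mu$ at positions $i$ and $j$) each land in $T^{n-1}$, and I would close them by one further reduction on each side down to $T^{n-2}$. If the two redexes are disjoint ($|i-j|\ge 2$), the closing diamond commutes by Lemma \ref{lem:disjoint}. If they overlap ($j=i+1$), the situation is the critical pair inside $T^3$ transported into context by \eqref{dia:extend}, displayed explicitly in \eqref{dia:muext}; its commutativity is precisely hypothesis (1). Since these are the only ways two redexes in a string of $T$'s can interact, the diamond property holds, and hence so does confluence.

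From confluence I would conclude terminality as follows. Every $T^n$ reduces to $T$ by $n-1$ applications of $\mu$, so a morphism $T^n\to T$ always exists, and $T$ is the unique normal form since it is the only irreducible string. Given two derivations $\pi_1,\pi_2\colon T^n\derives T$, confluence produces a string $W$ and derivations $T\derives W$ closing them into a commuting diagram; but the only derivation out of the normal form $T$ is the identity, so $W=T$ and both closing derivations are identities, forcing $\pi_1=\pi_2$ as 2-cells. Thus each object has a unique morphism to $T$, i.e.\ $T$ is terminal.

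The main obstacle is not any single calculation but keeping the semantics in view throughout. It must be checked that the diamonds closed in the two cases are genuine equalities of composite 2-cells, and that the passage from the diamond property to confluence preserves this: the standard tiling argument must be read as pasting commuting 2-cell diagrams, so that the confluence it yields is the semantic confluence of the rewrite category rather than a mere agreement of target strings. Verifying that the overlapping case reduces exactly to \eqref{dia:muext}, and hence to (1), is the point at which the hypothesis is actually consumed.
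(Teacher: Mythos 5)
Your proposal is correct and follows essentially the same route as the paper: Lemma \ref{lem:disjoint} for disjoint redexes and \eqref{dia:muext} for overlapping ones establish the diamond property, hence (semantic) confluence, and terminality then follows because every $T^n$ reduces to the unique normal form $T$, out of which only the identity derivation exists. Your write-up merely makes explicit the case analysis and the normal-form uniqueness argument that the paper states more tersely.
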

\begin{proof}
(1) $\Rightarrow$ (2):
Combining Lemma \ref{lem:disjoint} with the observation \eqref{dia:muext}, we have that the rewrite system consisting of the single rule $\mu$ on strings $T^n$ for $n\geq 1$ satisfies the diamond property and is therefore confluent. It follows that any diagram starting from $T^n$, $n\geq 1$, and ending with the normal form $T$ commutes. Moreover, there exists a reduction sequence from any such $T^n$ to $T$. Thus there is a unique morphism $T^n\to T$ for $n\geq 1$, so $T$ is a terminal object.

(2) $\Rightarrow$ (1):
Conversely, if $T$ is a terminal object, then the left-hand diagram of \eqref{dia:monad} must commute, since there is a unique morphism $T^3\to T$.
\end{proof}

\subsection{Monads}
\label{sec:monad}

Now we add the rewrite rule $\eta$ to the mix. This rule can be used to introduce a new occurrence of $T$ anywhere in the string. In the presence of $\eta$, $T$ is no longer a normal form, and in fact normal forms in the strict sense of string rewriting no longer exist. Nevertheless, $T$ is still a terminal object. Moreover, $T^0=I$ can now be included, since there is an arrow $I\to T$.

\begin{thm}
\label{thm:monad}
Consider a rewrite category on 1-cells $T\star$ and rules $\mu:T^2\to T$ and $\eta:I\to T$. The following are equivalent:
\begin{enumerate}
\item
$\mu$ and $\eta$ satisfy \eqref{dia:monad};
\item
$T$ is a terminal object in the rewrite category $\TCAT{T\star}{\mu,\eta}$.
\end{enumerate}
\end{thm}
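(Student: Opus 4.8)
The plan is to treat the two implications separately, reserving the real work for the uniqueness half of $(1)\Rightarrow(2)$.

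For $(2)\Rightarrow(1)$ I would argue exactly as in the converse direction of Lemma~\ref{lem:basic}. If $T$ is terminal, then there is a unique morphism $T^3\to T$, forcing the two sides of the associativity square (the left-hand diagram of \eqref{dia:monad}) to agree. For the unit laws, observe that $\mu\circ\eta T$ and $\mu\circ T\eta$ are both morphisms $T\to T$; since $T$ is terminal its unique endomorphism is $\id_T$, so both composites equal $\id_T$ and the right-hand diagram of \eqref{dia:monad} commutes.

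For $(1)\Rightarrow(2)$, existence of a morphism $T^n\to T$ is immediate: for $n\ge 1$ reduce by $n-1$ applications of $\mu$, and for $n=0$ apply $\eta\colon I\to T$. The difficulty is uniqueness, i.e.\ that any two derivations $T^n\derives T$ are equivalent. The obstacle is that, once $\eta$ is present, the system no longer terminates---$\eta$ may insert new copies of $T$ indefinitely---so although the critical pairs are all resolved by the axioms (the $\mu$/$\mu$ overlap by the left diagram of \eqref{dia:monad} together with \eqref{dia:muext}, the $\eta$/$\mu$ adjacencies by the unit laws of the right diagram, and all disjoint and $\eta$/$\eta$ overlaps by Lemma~\ref{lem:disjoint} and \eqref{dia:disjoint2}), local confluence alone does not yield confluence and Newman's lemma does not apply.

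To circumvent this I would induct on the number of applications of $\eta$ in a derivation $T^n\derives T$, showing that each such derivation is equivalent to the canonical $\eta$-free one. With no $\eta$'s and $n\ge1$ the derivation uses only $\mu$, hence is equivalent to the canonical reduction by Lemma~\ref{lem:basic}. For the inductive step, locate the last application of $\eta$; it inserts a $T$ into some intermediate string $T^k$, after which only $\mu$ is applied. Using the $\mu$-confluence of Lemma~\ref{lem:basic} I may reorder this final $\mu$-only segment so that the freshly inserted $T$ is the first to be merged with a neighbour; by the unit law $\mu\circ\eta T=\mu\circ T\eta=\id$, extended by scalar multiplication via \eqref{eqn:module3}, that merge cancels the inserted $\eta$, yielding an equivalent derivation with one fewer $\eta$. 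Since the length never drops below $1$ when $n\ge1$, every intermediate string remains $\mu$-reducible and the induction goes through, establishing uniqueness for $n\ge1$. Finally, a derivation $I\derives T$ must begin with the only rule applicable to the empty string, namely $\eta\colon I\to T$, and its remainder is a derivation $T\derives T$, equivalent to $\id_T$ by the $n=1$ case; hence the morphism $I\to T$ is unique as well. Thus $T$ admits a unique morphism from every object and is terminal. The main obstacle throughout is the non-termination introduced by $\eta$, which this induction on $\eta$-count---powered by the unit law as an absorption principle---is designed to overcome.
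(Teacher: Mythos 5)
Your proposal is correct, and the $(2)\Rightarrow(1)$ half coincides with the paper's: terminality forces the associativity square to commute, and forces $\mu\circ\eta T$, $\mu\circ T\eta$, and $\id$ to coincide as the unique morphism $T\to T$. Your $(1)\Rightarrow(2)$, however, is a genuinely different argument. The paper classifies rules as \emph{good} ($\mu$) and \emph{bad} ($\eta$), verifies that good--good pairs complete to good diamonds and that every pair of reductions is confluent via good derivations, and then establishes confluence of the whole system \TCAT{T\star}{\mu,\eta} by an informal tiling of the two derivation trees; terminality then follows because the only good derivation $T\derives T$ is the identity. You never prove global confluence at all. Instead you fix a derivation $\pi:T^n\derives T$ and induct on its number of $\eta$-steps: by terminality of $T$ in the $\mu$-only system \TCAT{T^+}{\mu} (Lemma~\ref{lem:basic}), the $\mu$-only tail after the last $\eta$ may be replaced by an equivalent one in which the freshly inserted $T$ is merged first, whereupon the scalar-multiplied unit law $\mu\circ T\eta=\id=\mu\circ\eta T$ (via \eqref{eqn:module3} and the right-hand diagram of \eqref{dia:monad}) cancels that $\eta$--$\mu$ pair, yielding an equivalent derivation with one fewer $\eta$; the base case is again Lemma~\ref{lem:basic}, and $n=0$ reduces to $n=1$ since $\eta$ is the only rule applicable to $I$. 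The argument is sound: the facts you need are exactly that lengths never drop below $1$ when $n\geq 1$ (so the inserted $T$ has a neighbour to merge with---your phrase ``every intermediate string remains $\mu$-reducible'' overstates this slightly, but harmlessly, since $T$ itself need not be reducible) and that equivalence of derivations in the $\mu$-only category persists in the larger equational theory. As for what each approach buys: your induction has an explicit well-founded measure (the $\eta$-count) and sidesteps the paper's somewhat hand-waving ``fill in with good diamonds'' completion step, so it is arguably tighter as written; the paper's good/bad postponement framework, on the other hand, is built for reuse---it transfers essentially verbatim to the five-rule system of Lemma~\ref{lem:monadcomp}, where a cancellation induction in your style would require noticeably more case analysis (e.g.\ insertions by $\etaP$ or $\etaT$ that feed the distributive law $\theta$ rather than a multiplication).
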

\begin{proof}
(1) $\Rightarrow$ (2):
Suppose we have a rewrite system in which the rules can be classified as either \emph{bad rules} or \emph{good rules} (in our application, a rule is \emph{bad} if it increases the length of the string, e.g.~$\eta$). Call a derivation \emph{good} if it uses only good rules. Suppose further that
\begin{enumerate}
\item[1.]
every pair of good reductions $X\to U$ and $X\to V$ can be completed to a diamond using only good reductions $U\to W$ and $V\to W$; and
\item[2.]
every pair of reductions $X\to U$ and $X\to V$, good or bad, are confluent using only good derivations $U\derives W$ and $V\derives W$.
\end{enumerate}
These conditions hold for $\TCAT{T\star}{\mu,\eta}$ under the assumption (1). We have already argued condition 1 for $\mu$ in Lemma \ref{lem:basic}. The right-hand diagram in \eqref{dia:monad} implies condition 2 by immediately inverting any application of $\eta$ whenever it is applied to a nonnull string. For example, consider applications of $\eta$ and $\mu$ to a substring $T^2$, where $\eta$ is applied between the two occurrences of $T$. The two redexes thus overlap.
\begin{align*}
\begin{array}c
  \begin{tikzpicture}[->, >=stealth', node distance=16mm, auto]
   \scriptsize
   \node (00) {$T^2$};
   \node (01) [right of=00] {$T$};
   \node (10) [below of=00, node distance=12mm] {$T^3$};
   \node (11) [right of=10] {};
   \path (00) edge node {$\mu$} (01)
         edge node[swap] {$T\eta T$} (10);
  \end{tikzpicture}
\end{array}
\end{align*}
The top arrow is good, but the left arrow is bad. However, using the right-hand diagram of \eqref{dia:monad}, the diagram can be completed using only good arrows:
\begin{align*}
\begin{array}c
  \begin{tikzpicture}[->, >=stealth', node distance=16mm, auto]
   \scriptsize
   \node (00) {$T^2$};
   \node (01) [right of=00] {$T$};
   \node (10) [below of=00, node distance=12mm] {$T^3$};
   \node (11) [right of=10] {$T^2$};
   \node (12) [right of=11] {$T$};
   \path (00) edge node {$\mu$} (01)
         edge node[swap] {$T\eta T$} (10)
         edge node {$\id$} (11);
   \path (10) edge node[swap] {$T\mu$} (11);
   \path (11) edge node[swap] {$\mu$} (12);
   \path (01) edge node {$\id$} (12);
  \end{tikzpicture}
\end{array}
\end{align*}

Now we argue that any system satisfying 1 and 2 is confluent. Let $X\derives U$ and $X\derives V$ be any two derivations involving good or bad rules. Starting from the apex $X$, move down the two derivations, adding good diamonds to the diagram in the case 1 and good confluent derivations in case 2. All new transitions added to the diagram are good. When done, there are no more exposed bad rules, and the diagram can be completed by filling in with good diamonds.

Thus any two derivations $T^n\derives T$ are confluent via good derivations, which must be of the form $T\derives T$. But the only good derivation $T\derives T$ is the identity. It follows that any two derivations $T^n\derives T
$ are equivalent; in other words, $T$ is a terminal object.

(2) $\Rightarrow$ (1):
Conversely, if $T$ is a terminal object of the rewrite category, then all diagrams starting with any $T^n$ and ending with $T$ must commute, in particular those of \eqref{dia:monad}, the defining conditions for monads.
\end{proof}


\begin{cor}
\label{cor:monad}
Let $\sigma$ be a 2-functor from the free 2-category on one 0-cell, one 1-cell $T$, and two 2-cells $\mu,\eta$ to $\Cat$, the 2-category of categories, functors, and natural transformations. Let $\EE$ be the equational theory on 2-cells induced by $\sigma^{-1}$. The following are equivalent:
\begin{itemize}
\item The equations \eqref{dia:monad} are a logical consequence of $\EE$.
\item $T$ is terminal in the rewrite category $\TCAT{T\star}{\mu,\nu}$ modulo $\EE$.
\item The image $(\sigma(T),\sigma(\mu),\sigma(\eta))$ is a monad.
\end{itemize}
\end{cor}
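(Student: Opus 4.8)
The plan is to establish the three equivalences by first tying the first and third conditions together through the definition of $\EE$, and then obtaining the middle condition as a direct instance of Theorem~\ref{thm:monad}. First I would unwind what $\EE$ is: since $\EE$ is induced by $\sigma^{-1}$, two parallel 2-cell expressions $s,t$ over $\{T,\mu,\eta\}$ are $\EE$-equal exactly when $\sigma(s)=\sigma(t)$ in $\Cat$, so $\EE$ is the kernel of $\sigma$ on 2-cells. Because $\sigma$ is a 2-functor it preserves vertical and horizontal composition and identities, hence respects every inference rule of 2-cell equational logic; consequently $\EE$ is already deductively closed, and an equation is a logical consequence of $\EE$ if and only if it lies in $\EE$. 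Applying this to the two equations of \eqref{dia:monad}, the first condition holds iff $\sigma$ sends the two sides of each equation to the same natural transformation, i.e.\ iff the diagrams \eqref{dia:monad} commute after applying $\sigma$. Since $\sigma$ preserves domains and codomains, $\sigma(T)$ is an endofunctor and $\sigma(\mu)\colon\sigma(T)^2\to\sigma(T)$, $\sigma(\eta)\colon I\to\sigma(T)$ carry the correct types automatically, so commutativity of the $\sigma$-images of \eqref{dia:monad} is precisely the assertion that $(\sigma(T),\sigma(\mu),\sigma(\eta))$ is a monad in the sense of Example~\ref{ex:monad}. This yields the equivalence of the first and third conditions.

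For the equivalence of the first and second conditions I would invoke Theorem~\ref{thm:monad}, applied to the rewrite category $\TCAT{T\star}{\mu,\eta}$ whose 2-cell equational theory is taken to be $\EE$ (this is the meaning of ``modulo $\EE$''). Hypothesis~(1) of that theorem---that $\mu$ and $\eta$ satisfy \eqref{dia:monad}---is exactly the statement that the equations \eqref{dia:monad} hold in the ambient theory, that is, that they are logical consequences of $\EE$; and its conclusion~(2) is that $T$ is terminal. The point to verify is that the proof of Theorem~\ref{thm:monad} uses the monad equations only as valid equalities between composites of 2-cells, so it transfers verbatim with $\EE$ in place of the bare theory generated by \eqref{dia:monad}, provided \eqref{dia:monad}$\subseteq\EE$. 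In particular, the reduction of terminality to the fact that the only good derivation $T\derives T$ is the identity is insensitive to enlarging the theory to $\EE$, since $\mu$ cannot be applied to $T$ and $\eta$ is bad. Conversely, if $T$ is terminal modulo $\EE$ then every diagram from $T^n$ to $T$ commutes in the theory $\EE$, in particular those of \eqref{dia:monad}, so \eqref{dia:monad}$\subseteq\EE$.

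The step I expect to require the most care is the translation between the syntactic notion of ``logical consequence of $\EE$'' and the semantic notion of ``commutes under $\sigma$.'' This hinges on the observation that $\EE$, being the kernel of the 2-functor $\sigma$, is automatically closed under the inference rules of 2-cell equational logic, so derivability from $\EE$ collapses to mere membership in $\EE$; without this remark one would have to argue separately about models of $\EE$. Once it is in place, the first and third conditions are identified by a direct unfolding of the monad definition, the first and second by a transparent application of Theorem~\ref{thm:monad}, and the three conditions follow as equivalent.
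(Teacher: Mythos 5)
Your proposal is correct and follows exactly the route the paper intends: the paper states Corollary~\ref{cor:monad} without proof, as an immediate consequence of Theorem~\ref{thm:monad}, and your argument is the natural filling-in of that step. Your two observations---that $\EE$, being the kernel of the 2-functor $\sigma$, is deductively closed (so ``logical consequence of $\EE$'' collapses to membership, identifying the first and third conditions with commutativity of the $\sigma$-images of \eqref{dia:monad}), and that Theorem~\ref{thm:monad} applies verbatim modulo any theory containing \eqref{dia:monad}---are precisely what is needed, so nothing is missing.
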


\subsection{Monad Composition}
\label{sec:monads}

Let $(P,\muP,\etaP)$ and $(T,\muT,\etaT)$ be monads on a category $\CatC$ connected by a \emph{distributive law} (or \emph{swapper} in the terminology of \cite{Eklund00}), which is a natural transformation $\theta:TP\to PT$ satisfying the following properties:
\begin{eqnarray}
&
\begin{array}c
  \begin{tikzpicture}[->, >=stealth', node distance=16mm, auto]
   \scriptsize
   \node (00) {$TP^2$};
   \node (01) [right of=00] {$PTP$};
   \node (02) [right of=01] {$P^2T$};
   \node (10) [below of=00, node distance=12mm] {$TP$};
   \node (12) [below of=02, node distance=12mm] {$PT$};
   \path (00) edge node {$\theta P$} (01)
         edge node[swap] {$T\muP$} (10);
   \path (10) edge node {$\theta$} (12);
   \path (01) edge node {$P\theta$} (02);
   \path (02) edge node {$\muP T$} (12);
   \node (20) [below of=10, node distance=12mm] {$T^2P$};
   \node (21) [right of=20] {$TPT$};
   \node (22) [right of=21] {$PT^2$};
   \path (20) edge node[swap] {$T\theta$} (21)
         edge node {$\muT P$} (10);
   \path (21) edge node[swap] {$\theta T$} (22);
   \path (22) edge node[swap] {$P\muT$} (12);
   \node (00) [right of=02, node distance=25mm, yshift=-6mm] {$TP$};
   \node (10) [below of=00, node distance=12mm] {$T$};
   \node (11) [right of=10] {$PT$};
   \path (00) edge node {$\theta$} (11);
   \path (10) edge node {$T\etaP$} (00)
         edge node[swap] {$\etaP T$} (11);
   \node (01) [right of=00] {$P$};
   \path (01) edge node[swap] {$\etaT P$} (00)
         edge node {$P\etaT$} (11);
  \end{tikzpicture}
\end{array}
\label{dia:distrib}
\end{eqnarray}

Distributive laws are discussed in depth in \cite[\S9.2]{BarrWells02}. A typical application is the construction of the \emph{complex algebra} of an algebra, whose elements are sets of elements of the original algebra. Here $P$ would be the powerset monad and $T$ the term monad of some variety of algebras, and $\theta$ takes a term of sets $t(\seq A1n)$ and turns it into a set of terms $\set{t(\seq a1n)}{a_i\in A_i,\ 1\leq i\leq n}$. These constructions are discussed in \cite{Brink93,Jipsen03}. Another example would be the combination of the additive and multiplicative monoid structures in semirings. Here $P$ would be the finite powerset monad and $T$ the free monoid construction.

In light of the theme of this note, it should be clear that the conditions in \eqref{dia:distrib} are nothing more than a way to establish local confluence in the case of overlapping redexes between $\theta$ and the monad operations. For example, the top rectangle of the left-hand diagram of \eqref{dia:distrib} handles overlapping redexes involving $\theta$ and $\muP$.

The two monads $P$ and $T$ can be combined as follows. Define
\begin{align}
\muPT &= \muP T\circ P^2\muT\circ P\theta T:(PT)^2\to PT &
\etaPT &= \etaP T\circ\etaT:I\to PT.\label{eq:PTdef}
\end{align}
Then $(PT,\muPT,\etaPT)$ is again a monad \cite{JonesDuponcheel93,BarrWells02,Eklund00}. We will verify this using Theorem \ref{thm:monad}. Most of the work is contained in the following lemma.

\begin{lem}
\label{lem:monadcomp}
Consider a rewrite category on 1-cells $\{P,T\}\star$ in a 2-category with a single 0-cell and rules $\muP:P^2\to P$, $\etaP:I\to P$, $\muT:T^2\to T$, $\etaT:I\to T$, and $\theta:TP\to PT$, such that $(\muP,\etaP)$ and $(\muT,\etaT)$ both satisfy the monad axioms \eqref{dia:monad} {\upshape(}with appropriate substitutions{\upshape)}. The following statements are equivalent:
\begin{enumerate}
\item
$\muP,\etaP,\muT,\etaT$, and $\theta$ satisfy the distributive laws \eqref{dia:distrib}.
\item
$PT$ is a terminal object in the rewrite category \TCAT{\{P,T\}\star}{\muP,\muT,\etaP,\etaT,\theta}.
\end{enumerate}
\end{lem}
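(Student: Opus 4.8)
The plan is to adapt the \emph{good}/\emph{bad} rule framework from the proof of Theorem~\ref{thm:monad}. I would call a rule \emph{good} if it does not lengthen the string, namely $\muP$, $\muT$, and $\theta$ (the two multiplications shorten the string, while $\theta$ preserves its length but moves each $P$ to the left of each $T$), and \emph{bad} if it lengthens the string, namely the units $\etaP$ and $\etaT$. For (1) $\Rightarrow$ (2), existence of \emph{some} morphism $W\to PT$ for every string $W$ is routine: apply $\theta$ repeatedly to sort $W$ into a block $P^aT^b$, collapse with $\muP,\muT$ to $P^{\min(a,1)}T^{\min(b,1)}$, and insert whichever of $P,T$ is missing with $\etaP$ or $\etaT$. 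All the work lies in uniqueness, i.e.\ that any two derivations $W\derives PT$ are equivalent.

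First I would prove confluence of the good subsystem. It terminates under the lexicographic measure $(\ell,\iota)$, where $\ell$ is the length and $\iota$ the number of inversions (occurrences of a $T$ to the left of a $P$): each multiplication strictly decreases $\ell$, and $\theta$ fixes $\ell$ while removing exactly one inversion. Its normal forms are exactly $I,P,T,PT$. For local confluence I would examine the critical pairs: disjoint redexes commute by Lemma~\ref{lem:disjoint}; the self-overlaps of $\muP$ on $P^3$ and of $\muT$ on $T^3$ close by the assumed monad axioms \eqref{dia:monad}; the overlaps of $\theta$ with $\muP$ on $TP^2$ and of $\theta$ with $\muT$ on $T^2P$ close by precisely the two faces of the left-hand diagram of \eqref{dia:distrib}; and two $\theta$-redexes never overlap, since $TP$ cannot overlap itself. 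Newman's lemma then gives confluence of the good subsystem, and the closing diagrams commute by the very critical-pair diagrams used (semantic confluence).

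Incorporating the bad rules is the step I expect to be the real obstacle. In the single-sorted setting of Theorem~\ref{thm:monad} every application of $\eta$ to a nonnull string could be cancelled immediately by the opposite unit law, so bad reductions closed using only good reductions. Here that fails: $\etaP$ can create the \emph{first} $P$ in a string of $T$'s (and dually $\etaT$), changing the good normal form---for example $T$ has good normal form $T$, but $T\etaP T:TT\to TPT$ has good normal form $PT$. Thus a divergence such as $TT\goesto{\muT}T$ versus $TT\goesto{T\etaP T}TPT$ cannot be closed by good reductions alone. The remedy is the two unit triangles of the right-hand diagram of \eqref{dia:distrib}, $\theta\circ T\etaP=\etaP T$ and $\theta\circ\etaT P=P\etaT$, which let a unit introduced ``on the wrong side'' be transported across $\theta$; one then slides it past the remaining multiplication using the interchange square \eqref{dia:disjoint2}. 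For the example this reads $(\theta T)\circ(T\etaP T)=\etaP TT$ followed by $(P\muT)\circ(\etaP TT)=(\etaP T)\circ\muT$, exhibiting the two derivations as equal. Checking this absorption for every overlap of a unit with $\theta$ or a multiplication, and organizing the tiling of Theorem~\ref{thm:monad} so that it still terminates once a controlled bad step is permitted at each such closure, is the bulk of the argument. Once this is in place, the two derivations $W\derives PT$ close, and since the only good derivation out of $PT$ is the identity they are equivalent; hence $PT$ is terminal.

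For (2) $\Rightarrow$ (1) the converse is immediate: if $PT$ is terminal then from each of $TP^2$, $T^2P$, $T$, and $P$ there is a unique morphism to $PT$, and since the two branches of every face of \eqref{dia:distrib} are morphisms into $PT$, each face commutes. Hence the distributive laws \eqref{dia:distrib} hold.
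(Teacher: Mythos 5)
Your skeleton matches the paper's in most respects: the same good/bad classification of rules, confluence of the good subsystem via termination plus critical pairs (your measure and your enumeration of the overlaps $P^3$, $T^3$, $TP^2$, $T^2P$ are both correct), the unit triangles of \eqref{dia:distrib} as the device for handling units created ``on the wrong side'' of $\theta$, and the same converse direction. You also correctly diagnose why the two-condition framework from the proof of Theorem~\ref{thm:monad} fails here---a divergence such as $TT\to T$ versus $TT\to TPT$ cannot be closed by good derivations alone---which is a genuine and important observation.

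The gap is precisely in what you yourself defer as ``the bulk of the argument.'' You propose to keep the divergence-tiling of Theorem~\ref{thm:monad} while ``permitting a controlled bad step at each closure,'' and you leave open how to make that tiling terminate; the worry is well-founded, since each such closure injects a new bad step into the diagram, which can collide with other reductions and spawn further closures containing further bad steps. The paper avoids this problem entirely by never tiling with bad steps. It operates on single derivations instead: every derivation $\pi:X\derives Y$ is first extended to a derivation $X\derives PT$, and then \emph{standardized}---all bad steps are commuted to the very end, using Lemma~\ref{lem:disjoint} for disjoint redexes and, for the six overlapping bad-then-good patterns (your absorption identities), either deletion via the unit laws of \eqref{dia:monad} or replacement of $T\goesto{T\etaP}TP\goesto{\theta}PT$ by $\etaP T$ via \eqref{dia:distrib}. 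Each such move preserves or shortens the derivation, so standardization terminates; it is an operation on one derivation at a time, not on a growing diagram. A case analysis on $X$ then pins down the standardized form: if $X$ contains both a $P$ and a $T$, the standardized derivation is entirely good (a trailing bad step would have to start from a string of length one, which is unreachable from $X$); if $X\in P^+$ it is a good derivation $X\derives P$ followed by the forced step $P\goesto{P\etaT}PT$, and the good prefixes are equivalent since $P$ is terminal in $\TCAT{P\star}{\muP,\etaP}$ by Theorem~\ref{thm:monad}; symmetrically for $X\in T^+$; and for $X=I$ the only two standardized derivations form a commutative square by Lemma~\ref{lem:disjoint}. After this reduction, only good derivations ever need to be tiled against one another, and there your quadratic termination bound applies. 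Supplying this standardize-then-case-split organization---rather than a modified tiling---is exactly what is missing from your sketch.
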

\begin{proof}
(1) $\Rightarrow$ (2):
First we show that the rewrite system is confluent. Recall that a rule is \emph{bad} if it increases length, \emph{good} otherwise. The good rules are $\muP$, $\muT$, and $\theta$, and the bad rules are $\etaP$ and $\etaT$.
 
Every pair of good reductions can be completed to a good confluent diagram. If the redexes do not overlap, this follows from Lemma \ref{lem:disjoint}. For redexes that overlap, all cases are covered by the left-hand diagram of \eqref{dia:monad} and the left-hand diagram of \eqref{dia:distrib}.

Given any derivation $\pi:X\derives Y$, possibly containing bad reductions, produce a new derivation $\pi'$ as follows:
\begin{enumerate}
\item[1.]
Extend the derivation to derive $PT$.
\item[2.]
Rearrange the resulting derivation $X\derives Y\derives PT$ to obtain an equivalent derivation $\pi':X\derives PT$ in which all the bad rules are applied after all the good rules.
\end{enumerate}

Step 1 can be accomplished by first introducing an occurrence of $P$ and/or $T$ using $\etaP$ and $\etaT$ if necessary, then moving all occurrences of $P$ to the left of all occurrences of $T$ using $\theta$, then collapsing the $P$'s using $\muP$ and the $T$'s using $\muT$.

Step 2 can be done without increasing the length of the derivation. For any bad reduction followed immediately by a good reduction, if the symbol introduced by the bad reduction is not part of the redex of the good reduction, then the two reductions can be switched by Lemma \ref{lem:disjoint}.

Otherwise, the symbol introduced by the bad reduction is part of the redex of the good reduction. There are only six ways this can happen:
\begin{align*}
& T\goesto{T\etaT}{}T^2\goesto{\muT}{}T
&& T\goesto{\etaT T}{}T^2\goesto{\muT}{}T
&& T\goesto{T\etaP}{}TP\goesto{\theta}{}PT
\end{align*}
and symmetrically for $P$. The first two are equivalent to the identity by the right-hand diagram of \eqref{dia:monad} and can be deleted. The last is equivalent to $T\goesto{\etaP T}{} PT$ by the right-hand diagram of \eqref{dia:distrib}. We can continue this process until there are no more bad reductions occurring before good reductions in the derivation.

If $X$ contains at least one occurrence each of $P$ and $T$, then this must also be true of any string derived from $X$, since all rules preserve this property. But then $\pi'$ can contain no bad reductions at all. If it did, then the last reduction would be bad. But it is impossible to derive $PT$ from such a reduction, since it would have to come from a string of length one, and no such string can be derived from $X$.

By a similar argument, if $X\in P^+$, then $\pi'$ contains exactly one bad reduction to introduce $T$, and it occurs last in the derivation. This last reduction must be of the form $P\goesto{P\etaT} PT$. Similarly, if $X\in T^+$, the last reduction of $\pi'$ is of the form $T\goesto{\etaP T} PT$, and this is the only bad reduction in the derivation.

If $X=I$, a similar argument shows that $\pi'$ must be either
\begin{align}
I\goesto{\etaT}T\goesto{\etaP T}{}PT
&& I\goesto{\etaP}P\goesto{P\etaT}{}PT.\label{eqn:XI}
\end{align}

Now suppose we are given derivations $\pi:X\derives U$ and $\rho:X\derives V$, possibly using both good and bad rules. To show confluence of $\pi$ and $\rho$, it suffices to show that $\pi':X\derives PT$ and $\rho':X\derives PT$ are confluent.

As argued above, if $X$ contains at least one occurrence each of $P$ and $T$, then $\pi'$ and $\rho'$ are good. Thus we can complete them to a good commutative diagram by filling in with good commutative diamonds and \eqref{dia:distrib}. This process must terminate, since there is a fixed upper bound, quadratic in the length of $X$, on the length of any good derivation from $X$, since each good reduction strictly decreases the string in length or lexicographic order relative to $P < T$ \cite[Lemma 2.7.2]{BaaderNipkow98}.

If $X\in P^+$, then as argued above, $\pi'$ and $\rho'$ are both of the form $X\derives P\goesto{P\etaT} PT$, where the prefixes $X\derives P$ contain no occurrence of $T$. By Theorem \ref{thm:monad}, $P$ is terminal in $\TCAT{P\star}{\muP,\etaP}$, therefore the two prefixes $X\derives P$ are equivalent, and consequently so are $\pi'$ and $\rho'$. The argument is similar for $X\in T^+$.

When $X=I$, we need only observe that the two derivations \eqref{eqn:XI} form a commutative diamond by Lemma \ref{lem:disjoint}.

Finally we show that $PT$ is a terminal object. We have already argued that there is at least one derivation of $PT$ from every $X\in\{P,T\}\star$, so there is at least one arrow $X\to PT$ in the rewrite category. To show that there is at most one, let $\pi,\rho:X\derives PT$ be any two derivations. By confluence, we can complete to a pair of equivalent derivations $\pi',\rho':X\derives PT\derives PT$. Rearranging the final portions $PT\derives PT$ of these two derivations by Step 2 above, we obtain good derivations. But the only good derivation $PT\derives PT$ is the identity, therefore $\pi$ and $\rho$ were already equivalent.

(2) $\Rightarrow$ (1):
If $PT$ is a terminal object, then \eqref{dia:distrib} holds, as all maximal paths in all diagrams of \eqref{dia:distrib} lead to $PT$.
\end{proof}

\begin{thm}[\cite{JonesDuponcheel93,BarrWells02,Eklund00}]
\label{thm:monadcomp}
Let $(T,\muT,\etaT)$ and $(P,\muP,\etaP)$ be monads connected by a distributive law $\theta:TP\to PT$ satisfying \eqref{dia:distrib}.
Let $\muPT$ and $\etaPT$ be defined by \eqref{eq:PTdef}. Then $(PT,\muPT,\etaPT)$ is a monad.
\end{thm}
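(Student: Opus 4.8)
The plan is to reduce the theorem to the two results already established, by an embedding of one rewrite category into another. By Lemma \ref{lem:monadcomp}, the hypotheses---that $(\muP,\etaP)$ and $(\muT,\etaT)$ are monads connected by a distributive law $\theta$ satisfying \eqref{dia:distrib}---imply that $PT$ is a terminal object in the large rewrite category $\TCAT{\{P,T\}\star}{\muP,\muT,\etaP,\etaT,\theta}$. On the other hand, by Theorem \ref{thm:monad} applied to the single generator $PT$ with rules $\muPT$ and $\etaPT$, the tuple $(PT,\muPT,\etaPT)$ is a monad if and only if $PT$ is terminal in the small rewrite category $\TCAT{(PT)\star}{\muPT,\etaPT}$. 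So it suffices to transfer terminality from the large category to the small one.

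To do this I would set up an embedding of the small rewrite category into the large one. A string $(PT)^n$ of the small category (where $PT$ is a single generator) is sent to the length-$2n$ string $PTPT\cdots PT$ in $\{P,T\}\star$, extending the assignment $PT\mapsto PT$ monoidally. The generating rule $\etaPT$ unfolds, by its definition \eqref{eq:PTdef}, into the big-category derivation $I\goesto{\etaT}T\goesto{\etaP T}PT$, and $\muPT$ unfolds into $PTPT\goesto{P\theta T}PPTT\goesto{P^2\muT}PPT\goesto{\muP T}PT$. Substituting these expansions in context turns each single reduction of the small category into a derivation of the large one, and hence each derivation $\pi$ in the small category into a derivation $\hat\pi$ in the large category between the corresponding strings. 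The crucial point is that, because $\muPT$ and $\etaPT$ are \emph{defined} to be exactly these composite $2$-cells, $\hat\pi$ computes the same $2$-cell as $\pi$.

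With the embedding in hand, terminality transfers directly. Existence of a morphism $(PT)^n\derives PT$ in the small category is immediate: collapse the $n$ copies by repeated application of $\muPT$, using $\etaPT$ when $n=0$. For uniqueness, given two derivations $\pi,\rho:(PT)^n\derives PT$, their images $\hat\pi,\hat\rho$ are parallel derivations between the same two strings of the large category, in which $PT$ is terminal; hence $\hat\pi$ and $\hat\rho$ are equivalent, i.e.\ compose to the same $2$-cell. Since the embedding preserves $2$-cell composition, $\pi$ and $\rho$ compose to the same $2$-cell as well, so they are equivalent. Thus $PT$ is terminal in $\TCAT{(PT)\star}{\muPT,\etaPT}$, and Theorem \ref{thm:monad} then yields that $(PT,\muPT,\etaPT)$ is a monad.

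The step I expect to require the most care is verifying that this embedding is faithful with respect to $2$-cell semantics---that unfolding $\muPT$ and $\etaPT$ according to \eqref{eq:PTdef} and propagating the expansions through an arbitrary derivation genuinely preserves the composite $2$-cell, so that equivalence of $\hat\pi$ and $\hat\rho$ in the large category forces equivalence of $\pi$ and $\rho$. This rests on the scalar-multiplication laws \eqref{eqn:module3} and the interchange law \eqref{eq:2cell} to commute each unfolding past its surrounding context, and it is precisely what makes terminality transfer correctly; the existence and collapsing arguments are routine by comparison.
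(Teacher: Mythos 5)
Your proposal is correct and takes essentially the same approach as the paper: the paper likewise combines Lemma \ref{lem:monadcomp} (terminality of $PT$ in the large category) with Theorem \ref{thm:monad}, obtaining uniqueness of arrows $(PT)^n\to PT$ by regarding \TCAT{(PT)\star}{\muPT,\etaPT} as a subcategory of \TCAT{\{P,T\}\star}{\muP,\muT,\etaP,\etaT,\theta} whose rules are the composite 2-cells \eqref{eq:PTdef}, and existence from applications of $\etaPT$ and $\muPT$. Your explicit ``embedding'' and its faithfulness with respect to 2-cell composition is precisely the content of the paper's subcategory step, just spelled out in more detail.
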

\begin{proof}
Consider the rewrite category\footnote{At the risk of ambiguity, we are overloading the symbols $P$, $\muP$, etc.~to refer to the components of the rewrite category as well as their images in \Cat.} \TCAT{\{P,T\}\star}{\muP,\muT,\etaP,\etaT,\theta}. By Theorem \ref{thm:monad}, it suffices to show that $PT$ is a terminal object in the rewrite category \TCAT{(PT)\star}{\muPT,\etaPT}. By Lemma \ref{lem:monadcomp}, we know that it is terminal in the category \TCAT{\{P,T\}\star}{\muP,\muT,\etaP,\etaT,\theta}, thus for any $n\geq 0$, there is exactly one arrow $(PT)^n\to PT$ in that category. It follows that there is at most one arrow $(PT)^n\to PT$ in the subcategory \TCAT{(PT)\star}{\muPT,\etaPT}. But there is also at least one, since we can derive $PT$ from $(PT)^0=I$ by the rule $\etaPT$ and from $(PT)^n$ for $n\geq 1$ by $n-1$ applications of the rule $\muPT$.
\end{proof}

\subsection{Adjunctions}
\label{sec:adjunction}

As a final application, we give a characterization of adjoint functors in terms of rewrite categories.

Recall that a functor $F:\CatC\to\CatD$ is a left adjoint of another functor $G:\CatD\to\CatC$ if there are natural transformations $\eta:I\to GF$ and $\eps:FG\to I$, the \emph{unit} and \emph{counit} of the adjunction, respectively, satisfying
\begin{align}
\begin{array}c
  \begin{tikzpicture}[->, >=stealth', node distance=16mm, auto]
   \scriptsize
   \node (00) {$G$};
   \node (10) [below of=00, node distance=12mm] {$GFG$};
   \node (11) [right of=10] {$G$};
   \path (00) edge node {$\id$} (11)
              edge node[swap] {$\eta G$} (10);
   \path (10) edge node[swap] {$G\eps$} (11);
   \node (00) [right of=00, node distance=25mm] {$F$};
   \node (01) [right of=00] {$FGF$};
   \node (11) [below of=01, node distance=12mm] {$F$};
   \path (00) edge node[swap] {$\id$} (11)
              edge node {$F\eta$} (01);
   \path (01) edge node {$\eps F$} (11);
  \end{tikzpicture}
\end{array}
\label{dia:adj}
\end{align}

\begin{thm}
\label{thm:adj}
Consider a 2-category on two 0-cells $\CatC$, $\CatD$ generated by 1-cells $F:\CatC\to\CatD$ and $G:\CatD\to\CatC$ and 2-cells $\eta:I\to GF$ and $\eps:FG\to I$. The following statements are equivalent:
\begin{enumerate}
\item
$\eta$ and $\eps$ satisfy \eqref{dia:adj}.
\item
$F$ is a terminal object in the rewrite category \TCAT{F(GF)\star}{\eta,\eps} and
$G$ is a terminal object in the rewrite category \TCAT{G(FG)\star}{\eta,\eps}.
\end{enumerate}
\end{thm}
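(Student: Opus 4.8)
The plan is to follow the \emph{good}/\emph{bad} rewriting argument developed in the proof of Theorem~\ref{thm:monad}, applied independently to the two rewrite categories $\TCAT{F(GF)\star}{\eta,\eps}$ and $\TCAT{G(FG)\star}{\eta,\eps}$. In both of them $\eps:FG\to I$ is a \emph{good} (length-decreasing) rule and $\eta:I\to GF$ is a \emph{bad} (length-increasing) rule. The objects of the first category are the strings $F(GF)^n=(FG)^nF$ and those of the second are the strings $(GF)^nG$. Since (2)$\Rightarrow$(1) will fall straight out of terminality, the substance of the proof is (1)$\Rightarrow$(2).

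For (1)$\Rightarrow$(2) I would first treat the $F$-category. Two observations do the work. First, two occurrences of the substring $FG$ can never overlap, so any two $\eps$-redexes in a string are disjoint; by Lemma~\ref{lem:disjoint} every pair of good reductions therefore closes to a good commuting diamond, which is condition~1 of the framework. Second, a legal $\eta$-insertion lands only at a $\CatC$-typed gap, and in $F(GF)\star$ every such gap has an $F$ immediately to its left---either an internal gap $F\mid G$ or the rightmost gap. In each case the preceding $F$ together with the freshly inserted $G$ forms a new $FG$-redex; contracting it by $\eps$ undoes the insertion, and the round trip $F(GF)^n\goesto{\eta}F(GF)^{n+1}\goesto{\eps}F(GF)^n$ equals the identity, since realized in context it is the right-hand identity $\eps F\circ F\eta=\id$ of \eqref{dia:adj} multiplied by suitable scalars. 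Thus every bad reduction is immediately invertible by a good one with identity round trip, which gives condition~2 (to make any pair of reductions good-confluent, first undo each $\eta$, then close the remaining disjoint $\eps$-redexes). Confluence now follows exactly as in Theorem~\ref{thm:monad}. Since every $(FG)^nF$ reduces to $F$ by $n$ applications of $\eps$, and the only good derivation $F\derives F$ is the identity (the string $F$ has no $FG$-redex), $F$ is terminal.

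The $G$-category is entirely analogous, with one genuine difference. For the strings $(GF)^nG$ the leftmost gap is $\CatC$-typed, so $\eta$ may be inserted there; such an insertion has no preceding $F$ and is instead undone by the $\eps$-redex formed by the inserted $F$ and the following $G$, whose round trip is the left-hand identity $G\eps\circ\eta G=\id$ of \eqref{dia:adj}. Consequently the $F$-category argument leans on the right-hand triangle and the $G$-category argument on the left-hand triangle, so both identities of \eqref{dia:adj} are genuinely used across the two cases. The remainder---disjointness of $\eps$-redexes, invertibility of $\eta$, confluence, and the unique (identity) good self-reduction at the normal form $G$---is identical, yielding that $G$ is terminal.

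For (2)$\Rightarrow$(1) I would simply observe that $\eps F\circ F\eta$ is a morphism $F\to F$ in $\TCAT{F(GF)\star}{\eta,\eps}$ and that $G\eps\circ\eta G$ is a morphism $G\to G$ in $\TCAT{G(FG)\star}{\eta,\eps}$. Terminality forces each of these to coincide with the unique endomorphism of the terminal object, namely the identity, and these two equations are exactly the triangle identities \eqref{dia:adj}. I expect the main obstacle to lie in the bookkeeping for (1)$\Rightarrow$(2): cataloguing the ways an $\eta$-insertion can overlap a subsequent $\eps$-contraction and matching each configuration to the correct scaled triangle identity, together with the asymmetry in where $\eta$ may legally be inserted in the two categories. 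Once that overlap analysis is in place, the confluence machinery of Theorem~\ref{thm:monad} applies verbatim.
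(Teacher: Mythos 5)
Your proposal is correct and takes essentially the same approach as the paper: the paper's proof likewise classifies $\eps$ as the good rule and $\eta$ as the bad rule, relies on the fact that distinct $FG$-redexes are disjoint (so Lemma \ref{lem:disjoint} handles all good-good pairs) together with the triangle identities \eqref{dia:adj} to cancel $\eta$--$\eps$ overlaps, and obtains (2)$\Rightarrow$(1) exactly as you do, from uniqueness of morphisms into the terminal objects. The only cosmetic difference is that the paper organizes the bad-rule elimination as rearranging each derivation so that all $\eta$'s occur after all $\eps$'s (in the style of Lemma \ref{lem:monadcomp}), whereas you invert each $\eta$ immediately with an identity round trip (in the style of Theorem \ref{thm:monad}); in the adjunction setting these amount to the same thing, since every $\eta$-then-$\eps$ overlap cancels to the identity.
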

\begin{proof}
(1) $\Rightarrow$ (2):
By arguments similar to the proof of Theorem \ref{thm:monad} and Lemma \ref{lem:monadcomp}, in any derivation $\pi:X\derives Y$, all applications of $\eta$ (the bad rule) can be moved after all applications of $\eps$ (the good rule), by either Lemma \ref{lem:disjoint} in the case of disjoint redexes or \eqref{dia:adj} in the case of overlapping redexes. The resulting equivalent derivation $\pi':X\derives Y$ is no longer than $\pi$.

Now if $\pi'$ is of the form $F(GF)^n\to F$, there can be no application of $\eta$, because the last $\eta$ would produce a string of length at least 2. But then all redexes are redexes of $\eps$, therefore are disjoint and can be done in any order by Lemma \ref{lem:disjoint}. Thus all derivations of the form $F(GF)^n\to F$ are equivalent. Moreover, there exists a derivation $F(GF)^n\to F$ consisting of $n-1$ applications of $\eps$. Thus $F$ is terminal in \TCAT{F(GF)\star}{\eta,\eps}. A symmetric argument shows that $G$ is terminal in \TCAT{G(FG)\star}{\eta,\eps}.

(2) $\Rightarrow$ (1):
In \eqref{dia:adj}, there is a unique morphism to $G$ in the left-hand diagram and a unique morphism to $F$ in the right-hand diagram, thus the diagrams commute.
\end{proof}

\begin{cor}
\label{cor:adj}
Let $\sigma$ be a 2-functor from the free 2-category on two 0-cells $\CatC,\CatD$, two 1-cells $F:\CatC\to\CatD$, $G:\CatD\to\CatC$, and two 2-cells $\eta:I\to GF$, $\eps:FG\to I$ to $\Cat$. Let $\EE$ be the equational theory on 2-cells induced by $\sigma^{-1}$. The following are equivalent:
\begin{itemize}
\item
The equations \eqref{dia:adj} are a logical consequence of $\EE$.
\item
$F$ is terminal in the rewrite category \TCAT{F(GF)\star}{\eta,\eps} modulo $\EE$ and $G$ is terminal in the rewrite category \TCAT{G(FG)\star}{\eta,\eps} modulo $\EE$.
\item
$\sigma(F)$ and $\sigma(G)$ are adjoint functors with $\sigma(F)$ the left adjoint and $\sigma(G)$ the right adjoint with unit $\sigma(\eta)$ and counit $\sigma(\eps)$.
\end{itemize}
\end{cor}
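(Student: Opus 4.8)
The plan is to prove the three-way equivalence by establishing $(A)\Leftrightarrow(C)$ directly from the definition of $\EE$ and $(A)\Leftrightarrow(B)$ by transporting Theorem \ref{thm:adj} into the appropriate quotient $2$-category, by the same reasoning that establishes Corollary \ref{cor:monad}. Throughout, write $(A)$, $(B)$, $(C)$ for the three bulleted conditions in the order listed, and let $\mathcal K$ denote the free $2$-category of the statement.

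First I would unwind the phrase ``$\EE$ induced by $\sigma^{-1}$'': $\EE$ is the kernel congruence of $\sigma$ on $2$-cells, identifying two $2$-cell-valued expressions exactly when $\sigma$ sends them to the same natural transformation. Since such a kernel is already a full congruence---closed under the rules of equational logic---an equation is a logical consequence of $\EE$ iff it already lies in $\EE$, i.e.\ iff it holds under $\sigma$. This gives $(A)\Leftrightarrow(C)$ immediately: the equations \eqref{dia:adj} are a logical consequence of $\EE$ iff $\sigma(G\eps\circ\eta G)=\sigma(\id)$ and $\sigma(\eps F\circ F\eta)=\sigma(\id)$, and these two identities are precisely the triangle identities asserting that $\sigma(F)$ is left adjoint to $\sigma(G)$ with unit $\sigma(\eta)$ and counit $\sigma(\eps)$.

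For $(A)\Leftrightarrow(B)$ I would pass to the quotient $2$-category $\mathcal K/\EE$, whose $2$-cells are the $\EE$-classes of the free $2$-cell expressions. By construction $\mathcal K/\EE$ satisfies exactly the equations in $\EE$, and the rewrite categories \TCAT{F(GF)\star}{\eta,\eps} and \TCAT{G(FG)\star}{\eta,\eps} \emph{modulo} $\EE$ are nothing but the corresponding rewrite categories computed inside $\mathcal K/\EE$, a morphism being an $\EE$-equivalence class of derivations. The key observation is that the proof of Theorem \ref{thm:adj} is purely equational: it invokes only Lemma \ref{lem:disjoint} (valid in every $2$-category) and the equations \eqref{dia:adj}, and never uses freeness of the ambient $2$-category. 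Hence Theorem \ref{thm:adj} holds verbatim in $\mathcal K/\EE$. There, ``\eqref{dia:adj} holds'' means the two sides of each triangle are $\EE$-equal, which is condition $(A)$, while ``$F$ (resp.\ $G$) is terminal'' is by definition terminality modulo $\EE$, which is condition $(B)$; so the two implications of Theorem \ref{thm:adj}, read off in $\mathcal K/\EE$, yield exactly $(A)\Rightarrow(B)$ and $(B)\Rightarrow(A)$.

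The main obstacle is thus not computational but a matter of scope: Theorem \ref{thm:adj} is stated for the free $2$-category, whereas the corollary needs it for the quotient $\mathcal K/\EE$. I would discharge this by checking, line by line, that the terminal-object argument of Theorem \ref{thm:adj}---moving each application of the bad rule $\eta$ past the good rule $\eps$ (via Lemma \ref{lem:disjoint} for disjoint redexes and \eqref{dia:adj} for the overlapping cases), the length bound forcing $\eta$-free normal derivations $F(GF)^n\to F$, and the uniqueness of the identity good derivation $F\to F$)---refers only to $2$-cell equalities that are consequences of the $2$-category axioms together with \eqref{dia:adj}, every one of which survives in $\mathcal K/\EE$ once \eqref{dia:adj} is assumed to be a consequence of $\EE$. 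Note that existence of at least one derivation $F(GF)^n\to F$ is unaffected by the quotient, so that only the uniqueness part (now uniqueness of an $\EE$-class) carries genuine content. With this verification in hand, the chain $(B)\Leftrightarrow(A)\Leftrightarrow(C)$ completes the proof.
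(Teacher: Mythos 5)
Your proposal is correct, and it matches the paper's intent: the paper states Corollary \ref{cor:adj} without proof, as an immediate consequence of Theorem \ref{thm:adj} together with exactly the observations you make---that $\EE$, being the kernel congruence of the 2-functor $\sigma$, is deductively closed (so ``logical consequence of $\EE$'' means ``holds under $\sigma$'', giving the equivalence with the triangle identities in \Cat), and that the proof of Theorem \ref{thm:adj} uses only the 2-category axioms and \eqref{dia:adj}, hence applies verbatim modulo $\EE$. Your explicit verification that the rewriting argument survives the quotient is a useful fleshing-out of what the paper leaves implicit, not a departure from it.
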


\section{Conclusion}

It is clear that these techniques are related to the notion of \emph{free adjunction} and \emph{free monad}
as presented by Schanuel and Street \cite{SchanuelStreet86}. A free adjunction is a certain 2-category that
is initial among all adjunctions and characterizes their equational theory.
For the future, we would like to explore these connections further and perhaps develop a notion of rewriting
for 2-categories and higher.

\section*{Acknowledgments}

Thanks to Riccardo Pucella for valuable comments.


\end{document}